\newtheorem{theorem}{Theorem}
\title{\LARGE \bf
Adversarial Examples for Model-Based Control: A Sensitivity Analysis
}
\author{Po-han Li$^{1}$, Ufuk Topcu$^{2}$, and Sandeep P. Chinchali$^{1}$
\thanks{$^{1}$Department of Electrical  and Computer Engineering, The University of Texas at Austin, $^{2}$Oden Institute for Computational Engineering and Sciences, The University of Texas at Austin,
        {\tt\small \{pohanli, utopcu, sandeepc\}@utexas.edu}}}
\begin{document}

\newcommand{\xtdim}{n}
\newcommand{\utdim}{m}
\newcommand{\stdim}{p}
\newcommand{\horizon}{T}
\newcommand{\currenttime}{t}
\newcommand{\Sgt}[1]{S_{#1}}
\newcommand{\Shat}[1]{\hat{S}_{#1}}
\newcommand{\action}[1]{u_{#1}}
\newcommand{\actionhat}[1]{\hat{u}_{#1}}
\newcommand{\state}[1]{x_{#1}}
\newcommand{\ctrlcost}{J^c}
\newcommand{\A}{\bm{A}}
\newcommand{\B}{\bm{B}}
\newcommand{\C}{\bm{C}}
\newcommand{\Q}{\bm{Q}}
\newcommand{\R}{\bm{R}}
\newcommand{\M}{\bm{M}}
\newcommand{\N}{\bm{N}}
\newcommand{\K}{\bm{K}}
\newcommand{\LL}{\bm{L}}
\newcommand{\codesignPsi}{\bm{\Psi}}
\newcommand{\blockdiag}{\mathrm{BlockDiag}}

\newcommand{\deltanoise}{\delta}
\newcommand{\domevec}{v_1}
\newcommand{\domeval}{\lambda_1}

\newcommand{\gencostfunc}{J^c}
\newcommand{\ineqfunc}[1]{f_{#1}}
\newcommand{\eqfunc}[1]{g_{#1}}
\newcommand{\ninequal}{n_\mathrm{ineq}}
\newcommand{\nequal}{n_\mathrm{eq}}

\newcommand{\advfunc}{h_\mathrm{adv}}
\newcommand{\unit}{\mathrm{Unit}}

\maketitle

\begin{abstract}

We propose a method to attack controllers that rely on external timeseries forecasts as task parameters. An adversary can manipulate the costs, states, and actions of the controllers by forging the timeseries, in this case perturbing the real timeseries. Since the controllers often encode safety requirements or energy limits in their costs and constraints, we refer to such manipulation as an adversarial attack. We show that different attacks on model-based controllers can increase control costs,
activate constraints, or even make the control optimization problem infeasible. We use the linear quadratic regulator and convex model predictive controllers as examples of how adversarial attacks succeed and demonstrate the impact of adversarial attacks on a battery storage control task for power grid operators. As a result, our method increases control cost by $8500\%$ and energy constraints by $13\%$ on real electricity demand timeseries. 

\end{abstract}

\section{Introduction}

There are rich applications for model predictive controllers (MPC) that rely on timeseries forecasts as task parameters. 
For example, cellular network traffic schedulers predict city-wide mobility data to assign base station connections to mobile devices \cite{chinchali2018cellular}, power grid operators use electricity demand patterns to optimize battery storage \cite{cheng2021data, NIPS2017Donti}, and stock traders use price forecasts to make trading decisions. 
In these applications, the timeseries are not measured nor determined by the controllers, but by external sources. 
We refer to such controllers as \textit{\textbf{input-driven controllers}}, where controllers use reliable estimates of internal control states and dynamics as well as external timeseries forecasts to make decisions, known as actions or controls. 
Since the controller plays a passive role in receiving external timeseries, a natural question is: are the timeseries forecasts also reliable? 
In this paper, we use the linear quadratic regulator (LQR) to discuss how epistemic uncertainty or malicious external sources can affect control cost or constraints. 
We further extend the discussion to convex MPC controllers.

\begin{figure}[ht]
\centering
\includegraphics[width=0.5\textwidth]{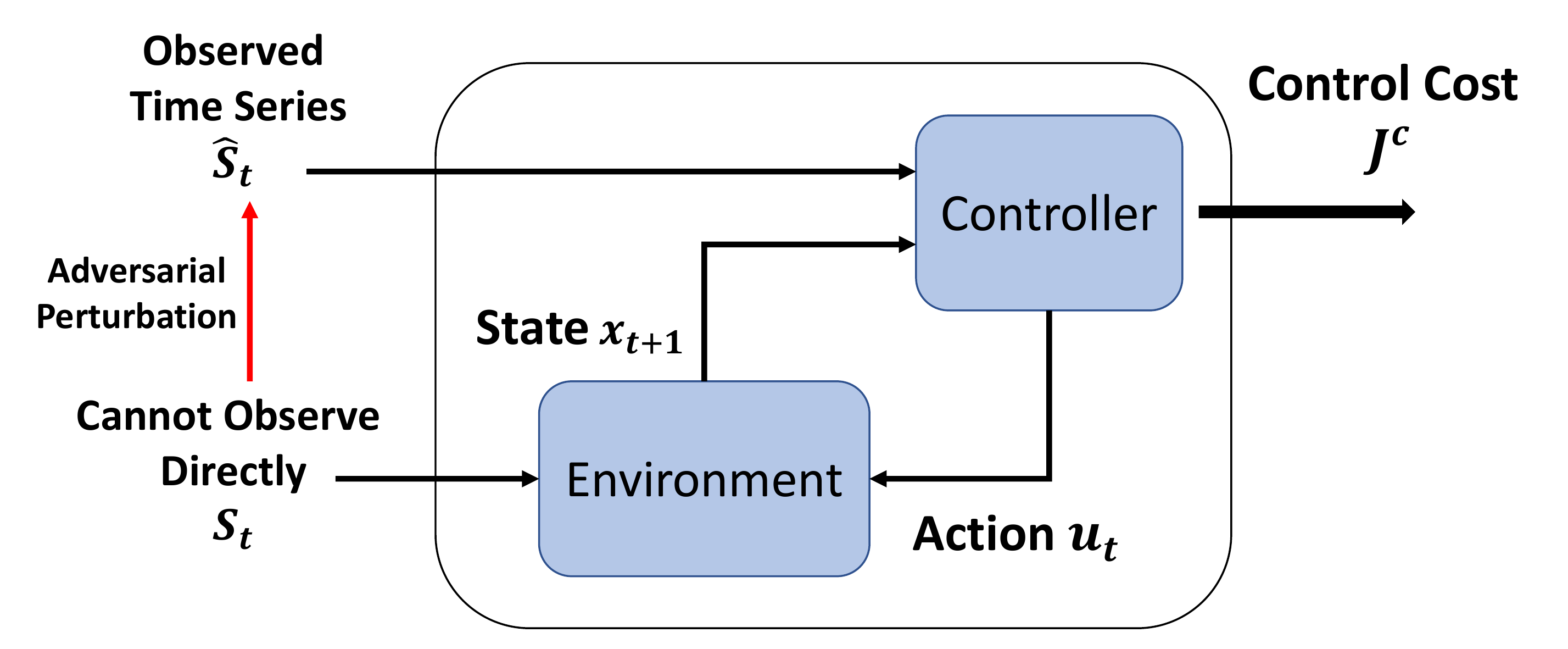}
\caption{\small
    \textbf{Adversarial Attacks on Timeseries Forecasts For Model-Based Control.}
    Many modern controllers require reliable forecasts of demand or prices to make decisions. In this paper, we show how slight perturbations in a forecast can dramatically increase control costs or violate control constraints. These errors in forecasting can occur due to out-of-distribution (OoD) timeseries, natural noise, or adversarial perturbations. Specifically, at time $\currenttime$, a controller observes state $\state{\currenttime}$ and timeseries $\Shat{\currenttime}$, which is perturbed by an adversarial source. Then, it takes action $\action{\currenttime}$ to minimize the cost $\ctrlcost$. 
    The next state $\state{\currenttime+1}$ is determined by the real timeseries $\Sgt{\currenttime}$, action $\action{\currenttime}$, and previous state $\state{\currenttime}$. 
    The adversarial source perturbs timeseries $\Sgt{\currenttime}$ to $\Shat{\currenttime}$ within a bounded perturbation in order to increase the control cost or make the controller violate constraints.
}
\label{fig:systemgrpah}
\vspace{-2em}
\end{figure}

\textbf{Related work:}
Our system model is close to \cite{cheng2021data}, which proposes an input-driven LQR controller with external forecasts of timeseries. However, in contrast to our work, it focuses on the optimal compression for timeseries across a bandwidth-limited network, while we instead focus on adversarial attacks. 

Adversarial attacks make bounded, often human-imperceptible, perturbations on a sensory input (e.g., image) to cause errors in output predictions (i.e., image classifications).
\cite{adversarialprobforecast} studies adversarial attacks on probabilistic autoregressive models, and \cite{ilyas2019adversarial, zhang2019theoretically} focus on adversarial noise for image classification. While \cite{PintoRobustRL, AVRobust, 2019adversarialpolicy, LQRobust, OnlineDataPoisoning} study adversarial attacks that affect the dynamics of a reinforcement learning (RL) agent, our work exploits the structure of a model-based control task to generate adversarial attacks on timeseries inputs.

Adversarial attacks in control systems have also been studied in \cite{KTHControlAdv, GenAdvControl, ICMLControlAdv}. 
They formulate the adversarial attack of a controller as a max-min problem, where the adversary's goal is to maximize the cost and the controller's goal is to minimize it. 
In \cite{KTHControlAdv}, the adversary adds perturbations to the data set of a non-linear data-driven controller, while perturbations are a type of noise which can affect the controller's states directly in \cite{GenAdvControl, ICMLControlAdv}.
Our work is distinct since we focus on how an adversary can perturb external forecasts of timeseries to (a) maximize the control cost and (b) make a controller nearly violate its strict state and control constraints. To the best of our knowledge, ours is one of the first works to describe how to introduce attacks that make a controller nearly violate state and control constraints, which are important in practice, for example, to express energy limits. 


\textbf{Insights and Contributions:}
Our key technical insight is that input-driven controllers are sensitive to errors in external timeseries, and input-driven controllers have different angles of attack, i.e. increasing control costs, activating constraints, or even making the control optimization problem infeasible. 
Such attacks are important since constraints are often essential to control tasks  -- state constraints describe desired safety levels or operating regions, and action constraints can be power or energy constraints. 
Therefore, we address possible white-box attacks, where a malicious external source knows the controller's parameters and its dynamics. 
Also, we formulate all attacks as a bounded perturbation on real timeseries, since the adversarial timeseries should be similar to the original one in order to be indistinguishable by human-guided analysis or anomaly detectors. 

Based on these insights, our contributions are three-fold. 
First, we analytically calculate the optimal, bounded perturbation of a timeseries forecast in order to increase a model-based controller's cost.
Second, we provide a numerical method to generate adversarial timeseries that attack strict control constraints. 
Lastly, we show by numerical experiments on real electricity demand data that adversarial attacks for control costs differ from those that violate strict state or control constraints.

\section{System Model and Problem Formulation}

\subsection{System model}

Imagine a battery storage operator's controller must decide whether to charge or discharge its batteries based on electricity price forecasts in the market. 
At time $\currenttime$, the electricity demand is an external timeseries $\Sgt{\currenttime} \in \mathbb{R}^{\stdim}$, which cannot be affected by the controller. The state $\state{\currenttime} \in \mathbb{R}^{\xtdim}$ represents the charge on batteries and action $\action{\currenttime} \in \mathbb{R}^{\utdim}$ represents how much to charge the batteries in order to minimize the cost. 
The controller cannot affect the external timeseries $\Sgt{\currenttime}$ since it is independent of actions $\action{\currenttime}$ and forecasted by an external source.
We denote \textbf{\textit{full future}} control vectors in bold fonts specifically, $\bm{ \action{}}=\action{0:\horizon-1}\in\mathbb{R}^{\utdim\horizon}$, $\bm{\Sgt{}}=\Sgt{0:\horizon-1}\in\mathbb{R}^{\stdim\horizon}$, and $\bm{\state{}}=\state{0:\horizon}\in\mathbb{R}^{\utdim(\horizon+1)}$ for a finite time horizon $\horizon$. 
The system dynamics is therefore determined by the external timeseries $\bm{\Sgt{}}$, action  $\bm{\action{}}$, and state $\bm{\state{}}$.
The controller may not be able to observe the real timeseries $\bm{\Sgt{}}$. Due to measurement noise, forecasting error, or attacks from an external source, it can only observe a perturbed timeseries $\bm{\Shat{}}$. However, we assume the controller perfectly knows its internal plant state $\state{\currenttime}$. 

The system model is shown in Fig. \ref{fig:systemgrpah}. 
First, at time $\currenttime$, the controller observes the perturbed timeseries $\Shat{\currenttime}$ and current state $\state{\currenttime}$ and then decides an action $\action{\currenttime}$.
Second, the action $\action{\currenttime}$, current state $\state{\currenttime}$, and real timeseries $\Sgt{\currenttime}$ determine the next state $\state{\currenttime+1}$ through the controller's known dynamics. 
The controller aims to minimize its control cost $\ctrlcost(\bm{\action{}};\bm{\Sgt{}},\state{0})$, which is a function of all actions $\bm{\action{}}$, initial state $\state{0}$, and real timeseries $\bm{\Sgt{}}$.
Note that the optimal action of a controller  $\bm{\action{}}^*(\state{0}; \bm{\Shat{}})$ is a function of initial state $\state{0}$, and the observed timeseries $\bm{\Shat{}}$. 
We next formulate the problem for a special case of the linear quadratic regulator (LQR).

\subsection{Problem Formulation}
We now describe the system dynamics of input-driven LQR and derive its control cost to obtain the sensitivity of the control cost with respect to forecasting errors. Our derivation extends the work of \cite{cheng2021data}, which introduces input-driven LQR but does not address adversarial attacks. 
For clarity, we summarize the derivation here and refer readers to \cite{cheng2021data} for details.
For any time step $\currenttime$, the linear system dynamics are given by:
$$\state{\currenttime+1}=\A\state{\currenttime}+\B\action{\currenttime}+\C\Sgt{\currenttime},$$
where $\A\in \mathbb{R}^{\xtdim \times \xtdim}$, $\B\in \mathbb{R}^{\xtdim \times \utdim}$, and $\C\in \mathbb{R}^{\xtdim \times \stdim}$ are the parameters describing how previous state $\state{\currenttime}$, action $\action{\currenttime}$, and external timeseries $\Sgt{\currenttime}$ affect the next state $\state{\currenttime+1}$.
We rewrite the system dynamics in a non-recursive form:
\begin{equation}
\state{\currenttime+1} = \A^{\currenttime+1}\state{0}+\M_\currenttime\bm{\action{}}+\N_\currenttime\bm{\Sgt{}},
\end{equation}
where $\M_\currenttime=[\A^{\currenttime}\B\;\;\A^{\currenttime-1}\B\;\;...\;\B\;\; \bm{0}] \in \mathbb{R}^{\xtdim\times\utdim T}$, $\N_\currenttime=[\A^{\currenttime}\C\;\;\A^{\currenttime-1}\C\;\;...\;\C\;\;\bm{0}] \in \mathbb{R}^{\xtdim\times\stdim T}$.
The linear quadratic cost function is defined as:
\begin{equation}
\begin{aligned}
\ctrlcost(\bm{\action{}};\bm{\Sgt{}},\state{0})=\sum_{\currenttime=0}^{\horizon}\state{\currenttime}^\top \Q\state{\currenttime}+\sum_{\currenttime=0}^{\horizon-1}\action{\currenttime}^\top\R\action{\currenttime}, 
\quad & \Q\succ0, \R\succ0.
\end{aligned}
\end{equation}
Positive-definite matrices $\Q$ and $\R$ are represented as $\Q, \R \succ 0$. 
The control cost is a function of $\bm{\Sgt{}}$ and $\bm{\action{}}$
\begin{equation}
\begin{aligned}
\ctrlcost(\bm{\action{}};\bm{\Sgt{}},\state{0})= \bm{\action{}}^\top \underbrace{\left(\blockdiag(\R, \horizon)+\sum_{\currenttime=0}^{\horizon-1}\M_\currenttime^\top \Q \M_\currenttime\right)}_{\K}\bm u\\
+2\underbrace{\left[\sum_{\currenttime=0}^{\horizon-1}\M_{\currenttime}^\top \Q(\A^{\currenttime+1}\state{0}+\N_{\currenttime}\bm{\Sgt{}})\right]^\top}_{\bm k(\state{0},s)^\top} \bm{\action{}} + \textrm{constant term},
\label{eq:ctrlcost}
\end{aligned}
\end{equation}
where $\blockdiag(\R, \horizon)\in \mathbb{R}^{\utdim\horizon\times\utdim\horizon}$ is a block matrix placing $\horizon$ $\R$ matrices on the diagonal, and the constant term independent of $\bm{\action{}}$ is $\sum_{\currenttime=0}^{\horizon-1} \left(\A^{\currenttime+1}\state{0}+\N_\currenttime\bm{\Sgt{}}\right)^\top\Q\left(\A^{\currenttime+1}\state{0}+\N_\currenttime\bm{\Sgt{}}\right)$. 
By Eq. \ref{eq:ctrlcost}, optimal actions are determined by the external timeseries which the controller observes:
$$\bm{\action{}}^*(\state{0};\Sgt{})=\arg\min_{\bm{\action{}}}\ctrlcost(\bm{\action{}};\bm{\Sgt{}},\state{0})=-\K^{-1} k(\state{0}, \bm{\Sgt{}}).$$ 
Thus, we obtain different optimal actions based on different observed timeseries and show their difference:
\begin{equation}
\bm{\action{}}^*=-\K^{-1} k(\state{0},\bm{\Sgt{}}), \;
\bm{\actionhat{}}^*=-\K^{-1} k(\state{0}, \bm{\Shat{}}).
\label{eq:OptimalActions}
\end{equation}
\begin{equation}
\begin{aligned}
\bm{\actionhat{}}^* - \bm{\action{}}^* & =
-\K^{-1} \left(k(\state{0}, \bm{\Shat{}})-k(\state{0},\bm{\Sgt{}}) \right) \\& =
-\K^{-1} \underbrace{\sum _{\currenttime=0}^{\horizon-1}\M_{\currenttime}^\top \Q \N_{\currenttime}}_{\LL} (\bm{\Shat{}}-\bm{\Sgt{}}).
\label{eq:actiondiff}
\end{aligned}
\end{equation}
Eq. \ref{eq:actiondiff} shows the errors in control are linear with respect to errors in forecasting, and the coefficient is determined by the parameters of the system dynamics and cost function. 
Next, we focus on the sensitivity of control cost $\Delta J$, defined as follows, and show that it is quadratic in forecasting error. 
That is, different elements of timeseries in different time steps have nonidentical effects on the change in control cost:
\begin{equation}
\begin{aligned}
\Delta \ctrlcost
&=\ctrlcost(\bm{\actionhat{}}^*; \bm{\Sgt{}},\state{0}) - \ctrlcost(\bm{\action{}}^*;\bm{\Sgt{}},\state{0}) \\
&=(\bm{\actionhat{}}^*- \bm{\action{}}^*)^\top\K(\bm{\actionhat{}}^*-\bm{\action{}}^*) \\
&= (\bm{\Shat{}}-\bm{\Sgt{}})^\top \underbrace{\LL^\top\K^{-1}\LL}_{\codesignPsi}(\bm{\Shat{}}-\bm{\Sgt{}}).
\label{eq:contr_cost_weighted_error}
\end{aligned}
\end{equation}
Note that $\codesignPsi\succeq 0$ since $\K\succ 0$. A way to intuitively explain Eq. \ref{eq:contr_cost_weighted_error} is that the change in control cost $\Delta \ctrlcost$ is determined by the matrix $\codesignPsi$, which weights the errors in elements of $\Sgt{}$ based on their importance to the control cost.
In the next section, we discuss different approaches to attack the controller by perturbing the timeseries $\Sgt{}$ to $\Shat{}$.

\section{Adversarial Attacks on Control Cost and Constraints}

\subsection{Cost Adversarial Attack}
We now derive the optimal attack $\bm{\Shat{}}$ for an input-driven LQR controller given real timeseries $\bm{\Sgt{}}$ and a perturbation bound $\deltanoise$.
We define $\deltanoise$ as the upper bound of the perturbation in the $L_2$ norm. 
The perturbation is restricted within a bound since attackers aim to generate timeseries similar to the original that are indistinguishable by human inspection or automated anomaly detectors. The problem is formulated as: 
\begin{equation}
\begin{aligned}
\max_{\bm{\Shat{}}}
\quad & \Delta \ctrlcost 
=(\bm{\Shat{}}-\bm{\Sgt{}})^\top \codesignPsi  (\bm{\Shat{}}-\bm{\Sgt{}}). \\
\mathrm{subject \ to} \quad
& \|\bm{\Shat{}}-\bm{\Sgt{}}\|_2 \leq \deltanoise
\label{eq:controladv}
\end{aligned}
\end{equation}

\begin{theorem} [Optimal Perturbation of Cost]
The optimal timeseries maximizing $\Delta \ctrlcost$ is: 
\begin{equation}
\begin{aligned}
\bm{\Shat{}}^*=\arg\max_{\bm{\Shat{}}}\quad&(\bm{\Shat{}}-\bm{\Sgt{}})^\top \codesignPsi  (\bm{\Shat{}}-\bm{\Sgt{}})=
\bm{\Sgt{}}\pm\deltanoise \domevec. \\
\mathrm{subject \ to} \quad & \|\bm{\Shat{}}-\bm{\Sgt{}}\|_2 \leq \deltanoise
\label{eq:argcostadv}
\end{aligned}  
\end{equation}
The corresponding change in control cost is:  
\begin{equation}
\begin{aligned}
\max_{\bm{\Shat{}}} \Delta  \ctrlcost = \deltanoise^2\domeval,
\end{aligned}  
\end{equation}
where $\domeval$, $\domevec$ are the dominant (largest) eigenvalue and eigenvector of $\codesignPsi$.
\label{thm:costadv}
\end{theorem}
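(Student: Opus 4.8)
The plan is to eliminate the affine offset by the change of variables $\bm z = \bm{\Shat{}} - \bm{\Sgt{}}$, which turns the constrained problem in Eq.~\ref{eq:controladv} into the homogeneous quadratic program $\max_{\bm z}\ \bm z^\top \codesignPsi \bm z$ subject to $\|\bm z\|_2 \le \deltanoise$. Since the decision variable $\bm{\Shat{}}$ enters only through $\bm z$, any maximizer $\bm z^*$ of this reduced problem maps back to $\bm{\Shat{}}^* = \bm{\Sgt{}} + \bm z^*$, so it suffices to characterize $\bm z^*$. The structure I would exploit is that $\codesignPsi \succeq 0$, established right after Eq.~\ref{eq:contr_cost_weighted_error}, so the objective is a convex quadratic being maximized over a convex ball, and the maximum is therefore attained on the boundary.

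First I would show the norm constraint is active at the optimum. For any feasible $\bm z$ with $\|\bm z\|_2 < \deltanoise$ and $\bm z^\top \codesignPsi \bm z > 0$, scaling by $\deltanoise / \|\bm z\|_2 > 1$ preserves feasibility while strictly increasing the objective by the square of that factor; hence we may restrict attention to $\|\bm z\|_2 = \deltanoise$. Writing $\bm z = \deltanoise \bm w$ with $\|\bm w\|_2 = 1$ then reduces the problem to maximizing the Rayleigh quotient $\bm w^\top \codesignPsi \bm w$ over the unit sphere.

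The core step is the Rayleigh--Ritz characterization: by the spectral theorem, the symmetric positive-semidefinite matrix $\codesignPsi$ admits an orthonormal eigenbasis, and $\max_{\|\bm w\|_2 = 1} \bm w^\top \codesignPsi \bm w = \domeval$, attained exactly at the unit eigenvectors of the largest eigenvalue, namely $\pm \domevec$. I would verify this either by expanding $\bm w$ in the eigenbasis and bounding $\sum_i \lambda_i c_i^2 \le \domeval \sum_i c_i^2 = \domeval$, or equivalently through the Lagrangian stationarity condition $\codesignPsi \bm z = \mu \bm z$, which forces $\bm z$ to be an eigenvector and the objective to equal $\mu \|\bm z\|_2^2$, maximized by the dominant eigenvalue. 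Undoing the substitution gives $\bm z^* = \pm \deltanoise \domevec$, hence $\bm{\Shat{}}^* = \bm{\Sgt{}} \pm \deltanoise \domevec$, and the optimal value $\Delta \ctrlcost = \deltanoise^2\, \domevec^\top \codesignPsi \domevec = \deltanoise^2 \domeval$.

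I do not anticipate a deep obstacle, as this is a standard constrained Rayleigh-quotient maximization; the points requiring care are minor. The degenerate case $\domeval = 0$ makes every feasible point optimal with $\Delta \ctrlcost = 0$, still consistent with the stated formula, and if the dominant eigenvalue has multiplicity greater than one the maximizer is not unique but any unit vector in its eigenspace works. Finally, the sign ambiguity $\pm$ simply reflects that $\domevec$ and $-\domevec$ yield identical cost, so both perturbation directions are equally optimal.
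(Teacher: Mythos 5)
Your proof is correct, and its core coincides with the paper's: both reduce the problem to maximizing the quadratic form $(\bm{\Shat{}}-\bm{\Sgt{}})^\top \codesignPsi (\bm{\Shat{}}-\bm{\Sgt{}})$ over the ball of radius $\deltanoise$ and identify the maximizer with the dominant eigenvector direction $\pm\deltanoise\domevec$. The difference is one of completeness rather than of route. The paper's proof applies the method of Lagrange multipliers, obtains the stationarity condition $\codesignPsi(\bm{\Shat{}}-\bm{\Sgt{}}) = -\lambda(\bm{\Shat{}}-\bm{\Sgt{}})$, and then simply \emph{selects} the dominant eigenvector "since we aim to maximize" --- this is a first-order necessary-condition argument, and it leaves implicit both why the optimum lies on the boundary of the ball and why, among all eigenvector candidates, the dominant one is globally optimal. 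Your proposal closes both gaps: the scaling argument (any interior point with positive objective can be pushed to the boundary, strictly increasing the objective) justifies restricting to $\|\bm z\|_2 = \deltanoise$, and the Rayleigh--Ritz bound $\sum_i \lambda_i c_i^2 \le \domeval \sum_i c_i^2$ in the orthonormal eigenbasis is a genuine certificate of global optimality, not merely a stationarity check. You also treat the degenerate cases ($\domeval = 0$, multiplicity of the top eigenvalue) that the paper ignores; note that under multiplicity the stated maximizer set is strictly larger than $\bm{\Sgt{}}\pm\deltanoise\domevec$, a caveat the theorem statement itself glosses over. In short: the paper's argument is shorter, yours is airtight; nothing in your write-up would fail.
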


\begin{proof}
We rewrite the constraint in Eq. \ref{eq:controladv} as:
$$\|\bm{\Shat{}}-\bm{\Sgt{}}\|_2^2 \leq \deltanoise^2.$$
Then, using the method of Lagrange Multipliers:
$$\nabla_{(\bm{\Shat{}}-\bm{\Sgt{}})} \Delta \ctrlcost
=\lambda \nabla_{(\bm{\Shat{}}-\bm{\Sgt{}})} (\deltanoise^2-\|\bm{\Shat{}}-\bm{\Sgt{}}\|^2_2).$$
Using the derivative of a quadratic functions, we see 
$$\codesignPsi(\bm{\Shat{}}-\bm{\Sgt{}}) = -\lambda(\bm{\Shat{}}-\bm{\Sgt{}}).$$
So far, we showed the optimal solution $\bm{\Shat{}}^*$ occurs when $\bm{\Shat{}}-\bm{\Sgt{}}$ is  $\codesignPsi$'s eigenvector.
Since we aim to maximize $\Delta\ctrlcost$, $\bm{\Shat{}}-\bm{\Sgt{}}$ is chosen as the dominant eigenvector $\domevec$ of $\codesignPsi$ corresponding to the largest eigenvalue $\domeval$.
Without loss of generality, we set $\left\|\domevec\right \|_2=1$. 

Lastly, the corresponding optimal solutions and value are:
\begin{equation*}
\begin{aligned}
\bm{\Shat{}} = \bm{\Sgt{}} \pm \deltanoise \domevec, ~
\max_{\bm{\Shat{}}} \Delta  \ctrlcost = \deltanoise^2\domeval.
\end{aligned}
\end{equation*}
\end{proof}

The result is two-fold. First, there are two optimal solutions and two optimal perturbations. Second, the optimal perturbation is independent of the real timeseries and the initial state, but only depends on the system dynamics parameters and the cost function. 
That is, the adversary only needs to know the system dynamics parameters and cost metrics of the controller, not the real timeseries or the initial state of the controller, thus making the attack easier to implement. 

\subsection{Control-agnostic Attack is Random}

We consider a benchmark where the attack has the same bound on perturbations, but is agnostic to the control task.
Namely, the perturbation is determined without any knowledge of the system dynamics and the cost function.
Similar to Eq. \ref{eq:controladv}, the problem is formulated as:
\begin{equation}
\begin{aligned}
\max_{\bm{\Shat{}}}
\quad & \Delta \ctrlcost 
=(\bm{\Shat{}}-\bm{\Sgt{}})^\top  (\bm{\Shat{}}-\bm{\Sgt{}}). \\
\mathrm{subject \ to} \quad
& \|\bm{\Shat{}}-\bm{\Sgt{}}\|_2 \leq \deltanoise
\label{eq:controlagnos}
\end{aligned}
\end{equation}

This is a special case of Thm. \ref{thm:costadv} where $\codesignPsi$ is the identity matrix $\mathbf{I}$. 
In this case, all unit vectors are eigenvectors, and the corresponding eigenvalue is always 1. 
Consequently, all vectors which activate constraint $\|\bm{\Shat{}}-\bm{\Sgt{}}\|_2= \deltanoise$ are optimal, and the optimal value is $\deltanoise^2$.

\subsection{Constraint Adversarial Attack and A More General Form}
\label{sec:constr_adv}
In the previous sections, we only consider the simplest case of an input-driven LQR controller without constraints. 
We now consider a more general setting of an input-driven controller with linear dynamics, which is a \textbf{\textit{convex optimization problem}}:
\begin{subequations}
\begin{align}
\min_{\bm{\action{}}}\quad & \gencostfunc(\bm{\action{}}; \state{0},\bm{\Sgt{}}) \label{eq:generalLQRobj} \\
\mathrm{subject \ to} \quad & \state{\currenttime+1} = \A^{\currenttime+1}\state{0}+\M_\currenttime\bm{\action{}}+\N_\currenttime\bm{\Sgt{}}, \notag \\ & \hskip 7.65em \forall \currenttime=0,\ldots,\horizon-1 \label{eq:generalLQRdynamics} \\
& \ineqfunc{i}(\bm{\action{}}; \state{0},\bm{\Sgt{}})\leq 0, ~ \forall i=1,\ldots, \ninequal  \label{eq:generalLQRineqconstr}\\
& \eqfunc{i}(\bm{\action{}}; \state{0},\bm{\Sgt{}})= 0, ~ \forall i=1,\ldots, \nequal,  \label{eq:generalLQReqconstr}
\end{align}
\label{eq:generalLQR}
\end{subequations}
where \ref{eq:generalLQRobj} is a convex cost function, and \ref{eq:generalLQRdynamics} is the system dynamics. \ref{eq:generalLQRineqconstr} and \ref{eq:generalLQReqconstr} are constraints of the controller. 
The optimal solution of Eq. \ref{eq:generalLQR} is defined as $\bm{\action{\mathrm{gen}}}^*$. In most of the cases, $\bm{\action{\mathrm{gen}}}^*$ does not have an analytical solution and one can only obtain a numerical solution using convex optimization solvers such as \cite{cvxr2020, agrawal2018rewriting, cplex2009v12, mosek}.
Note that all $\ineqfunc{}(\bm{\action{}}; \state{0},\bm{\Sgt{}})$ are convex and $\eqfunc{}(\bm{\action{}}; \state{0},\bm{\Sgt{}})$ are affine since this is a convex optimization problem. 
$\bm{\action{\mathrm{gen}}}^*$ is determined by parameters $\state{0}$ and $\bm{\Sgt{}}$ given fixed $\ineqfunc{}$ and $\eqfunc{}$. 
When the optimization problem is defined with a set of parameters $\state{0}$ and $\bm{\Sgt{}}$, the solution map is defined as a function mapping the parameters to the solution.
We thus rewrite $\bm{\action{\mathrm{gen}}}^*$ as a solution map 
$\bm{\action{\mathrm{gen}}}^*(\state{0};\bm{\Sgt{}})$.

We now formulate the optimal attack to the general setting of Eq. \ref{eq:generalLQR} then approximate the attack by a first-order Taylor Series expansion. 
As before, the controller only observes the perturbed timeseries $\bm{\Shat{}}$, and the perturbation aims to maximize any differentiable function $\advfunc(\bm{\action{\mathrm{gen}}}^*(\state{0};\bm{\Shat{}}); \state{0},\bm{\Sgt{}})$
within an upper bound $\deltanoise$. 
$\advfunc(\bm{\action{\mathrm{gen}}}^*(\state{0};\bm{\Shat{}}); \state{0},\bm{\Sgt{}})$ is referred to as the target function. 
Then, the optimal attack is: 
\begin{subequations}
\begin{align}
\bm{\Shat{}}^*=\arg\max_{\bm{\Shat{}}}
\quad &\advfunc(\bm{\action{\mathrm{gen}}}^*(\state{0};\bm{\Shat{}}); \state{0},\bm{\Sgt{}}). \\
\mathrm{subject \ to} \quad & \|\bm{\Shat{}}-\bm{\Sgt{}}\|_2 \leq \deltanoise\label{eq:generalperturb_constr}
\end{align}
\label{eq:generalperturb}
\end{subequations}
When the target function $\advfunc$ is the change in control cost $\Delta\ctrlcost$ as in Eq. \ref{eq:contr_cost_weighted_error}, and the solution map of actions are the same as Eq. \ref{eq:OptimalActions}, namely,  
\begin{equation}
\begin{aligned}
\advfunc(\bm{\action{\mathrm{gen}}}^*(\state{0};\bm{\Shat{}}); \state{0},\bm{\Sgt{}})
&=\Delta\ctrlcost \\
&=\ctrlcost(\bm{\action{\mathrm{gen}}}^*(\state{0};\bm{\Shat{}}); \bm{\Sgt{}},\state{0}) \\
& \quad - \ctrlcost(\bm{\action{\mathrm{gen}}}^*(\state{0};\bm{\Sgt{}});\bm{\Sgt{}},\state{0}) \\
\bm{\action{\mathrm{gen}}}^*(\state{0};\bm{\Shat{}}) &= -\K^{-1} k(\state{0}, \bm{\Shat{}}) \\
\bm{\action{\mathrm{gen}}}^*(\state{0};\bm{\Sgt{}}) &= -\K^{-1} k(\state{0}, \bm{\Sgt{}}),
\label{eq:general_is_costadv}
\end{aligned}
\end{equation}
the problem is identical to the one formulated in Thm. \ref{thm:costadv}.

Eq. \ref{eq:generalperturb} may not be a convex optimization problem, since solution maps of a convex optimization problem are not concave nor convex in general. 
However, one can approximately solve Eq. \ref{eq:generalperturb} by $\nabla_{\bm{\Shat{}}} \bm{\action{\mathrm{gen}}}^*(\state{0};\bm{\Shat{}})$, the gradient of  $\bm{\action{\mathrm{gen}}}^*(\state{0};\bm{\Shat{}})$ with respect to $\bm{\Shat{}}$, as described in \cite{amos2017optnet, agrawal2019differentiable}.
Here, $\nabla_{\bm{\Shat{}}} \bm{\action{\mathrm{gen}}}^*(\state{0};\bm{\Shat{}}) \in \mathbb{R}^{\stdim\horizon \times \utdim\horizon}$ is a matrix where its $(i,j)$ element is $\sfrac{\partial \bm{\action{\mathrm{gen,j}}^*}}{\partial \bm{\Shat{i}}}$ with subscripts $i,j$ denoting the $i, j$ elements of $\bm{\Shat{i}}$ and $\bm{\action{\mathrm{gen,j}}^*}$, respectively. 
Hence, we can solve Eq. \ref{eq:generalperturb} using $\nabla_{\bm{\Shat{}}} \bm{\action{\mathrm{gen}}}^*(\state{0};\bm{\Shat{}})$ as a linear local approximation. 
Assuming that $\deltanoise$ is small, and $\advfunc$ is differentiable with respect to $\bm{\action{\mathrm{gen}}}^*$, Eq. \ref{eq:generalperturb} can be approximated as: 
\begin{equation}
\begin{aligned}
\bm{\Shat{}}^* \approx & ~ \bm{\Sgt{}} + \deltanoise \times \unit (\nabla_{\bm{\Shat{}}} \advfunc(\bm{\action{\mathrm{gen}}}^*(\state{0};\bm{\Shat{}}); \state{0},\bm{\Sgt{}})|_{\bm{\Shat{}}=\bm{\Sgt{}}})\\
= &~ \bm{\Sgt{}} + \deltanoise \times \unit (
\nabla_{\bm{\Shat{}}} \bm{\action{\mathrm{gen}}}^*(\state{0};\bm{\Shat{}})|_{\bm{\Shat{}}=\bm{\Sgt{}}} \times \\
& \quad 
\begin{bmatrix}
    \sfrac{\partial \advfunc}{\partial \action{\mathrm{gen,1}}}^* \\
    \sfrac{\partial \advfunc}{\partial \action{\mathrm{gen,2}}}^* \\
    \vdots \\
    \sfrac{\partial \advfunc}{\partial \action{\mathrm{gen, \utdim\horizon}}}^*
\end{bmatrix}|_{\bm{\action{\mathrm{gen}}}^*(\state{0};\bm{\Sgt{}})} )\mathrm{ (chain \  rule)}
\label{eq:chainrule}
\end{aligned}
\end{equation}
Here, $\unit(\cdot)$ is the operator normalizing a vector to a unit vector.

\begin{figure*}[t]
\begin{minipage}{0.7\textwidth}
\centering
\includegraphics[width=\columnwidth]{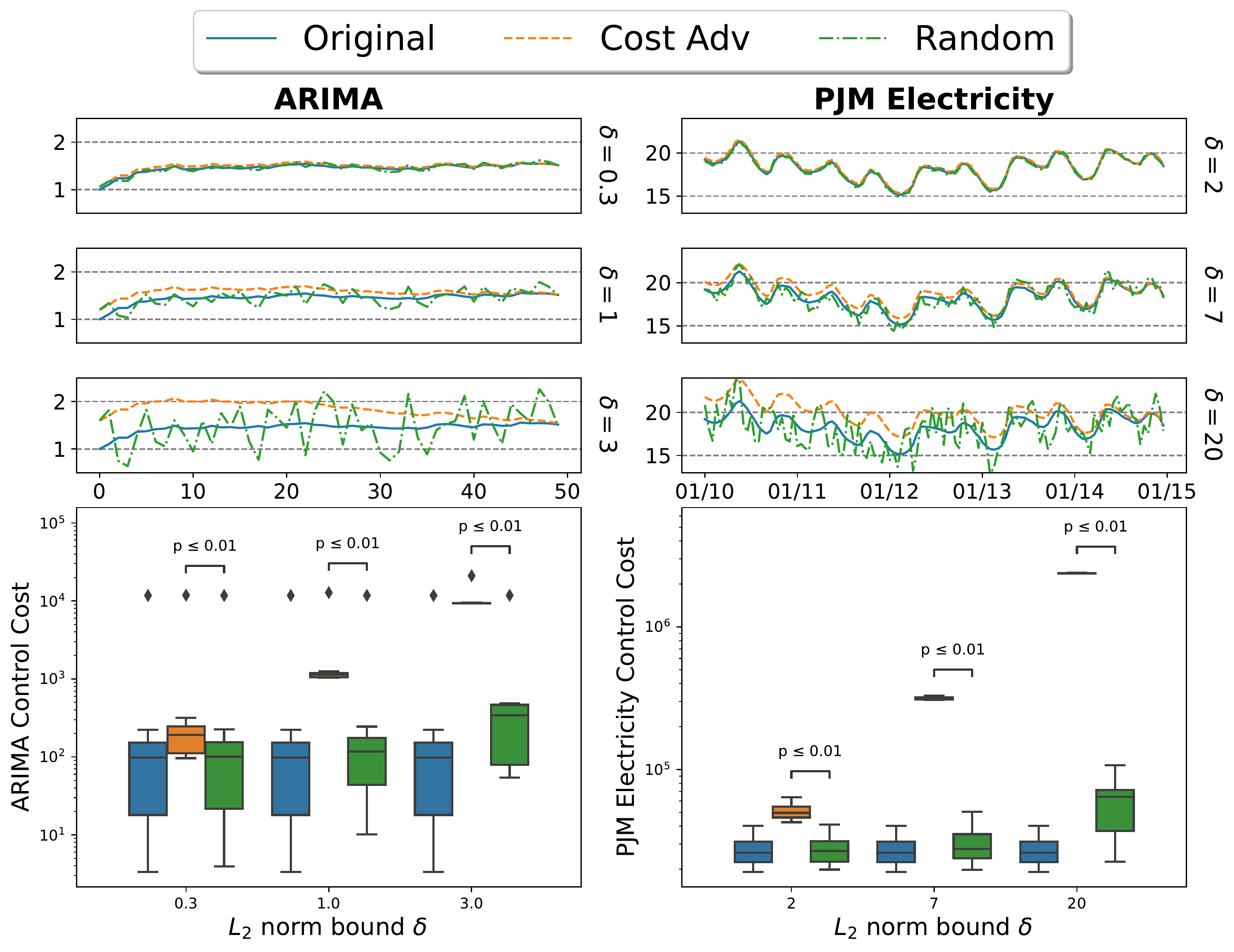}
\end{minipage}\hfill
\begin{minipage}{0.3\linewidth}
\caption{\small
    \textbf{Adversarial Timeseries can affect control cost dramatically.} 
    The first three rows show how the original and perturbed timeseries differ under various perturbation bounds $\deltanoise$. 
    The last row shows the corresponding control costs in different scenarios on a log scale. 
    In general, the dominant eigenvalue $\domeval$ is large for high dimensional $\codesignPsi$, and control costs of \textit{Cost Adv} are quadratic in the perturbation bound $\deltanoise$ and linear to the eigenvalue $\domeval$ (Thm. \ref{thm:costadv}). 
    Consequently, although the timeseries look similar, the resulting costs are increased by orders of magnitude, illustrating that our synthesized attacks are indeed powerful.
    Indeed, the adversarial attacks significantly increase control cost compared to the \textit{Original} timeseries and a benchmark of random perturbations with the same bound (\textit{Random}) with a significant Wilcoxon p-value of $\leq0.01$.
    \label{fig:adv_agnostic}
  }
 \end{minipage}\hfill
 \vspace{-2em}
\end{figure*}

\textbf{Constraint Adversarial $\advfunc$:} 
Eq. \ref{eq:generalperturb} and \ref{eq:chainrule} show how attackers can maximize any differentiable function, but we focus on cases where the target function is related to the inequality constraints in Eq. \ref{eq:generalLQR}. 
When the attack target is related to the inequality constraints, the attacker can bring the controller closer to activating or even violating its constraints. 
Taking the battery storage operator’s controller as an example, an attacker can increase the controller's energy consumption when the controller has energy constraints. Also, the cost function may not capture the constraints, thus making it more difficult to detect the presence of attacks. 
Note that in some cases, when all constraints of a MPC problem are violated, the control problem may become infeasible. 
We list some common scenarios and the corresponding constraints here.
\begin{enumerate}
    \item \textit{\textbf{Box-constrained LQR}}
    There are upper and lower constraints of control actions,  $\action{min}\leq\bm{\action{}}\leq\action{max}$. Then
    $$\advfunc(\bm{\action{\mathrm{gen}}}^*(\state{0};\bm{\Shat{}}); \state{0},\bm{\Sgt{}})=\max_{\currenttime}\{\action{\mathrm{gen,\currenttime}}^*\}, \text{ or}$$
    $$\advfunc(\bm{\action{\mathrm{gen}}}^*(\state{0};\bm{\Shat{}}); \state{0},\bm{\Sgt{}})=\max_{\currenttime}\{-\action{\mathrm{gen,\currenttime}}^*\}.$$

    \item \textit{\textbf{Energy Limitation}} 
    The controller has an energy constraint in the finite time horizon $\horizon$, so its $L_1$ norm of actions is limited, Namely, $\|\bm{\action{\mathrm{gen}}}^*\|_1 - \text{constant}\leq 0$. Then,
    $$\advfunc(\bm{\action{\mathrm{gen}}}^*(\state{0};\bm{\Shat{}}); \state{0},\bm{\Sgt{}})=
    \|\bm{\action{\mathrm{gen}}}^*(\state{0};\bm{\Shat{}})\|_1.$$
\end{enumerate}

\begin{figure*}[!]
  \centering
  {\includegraphics[width=0.97\textwidth]{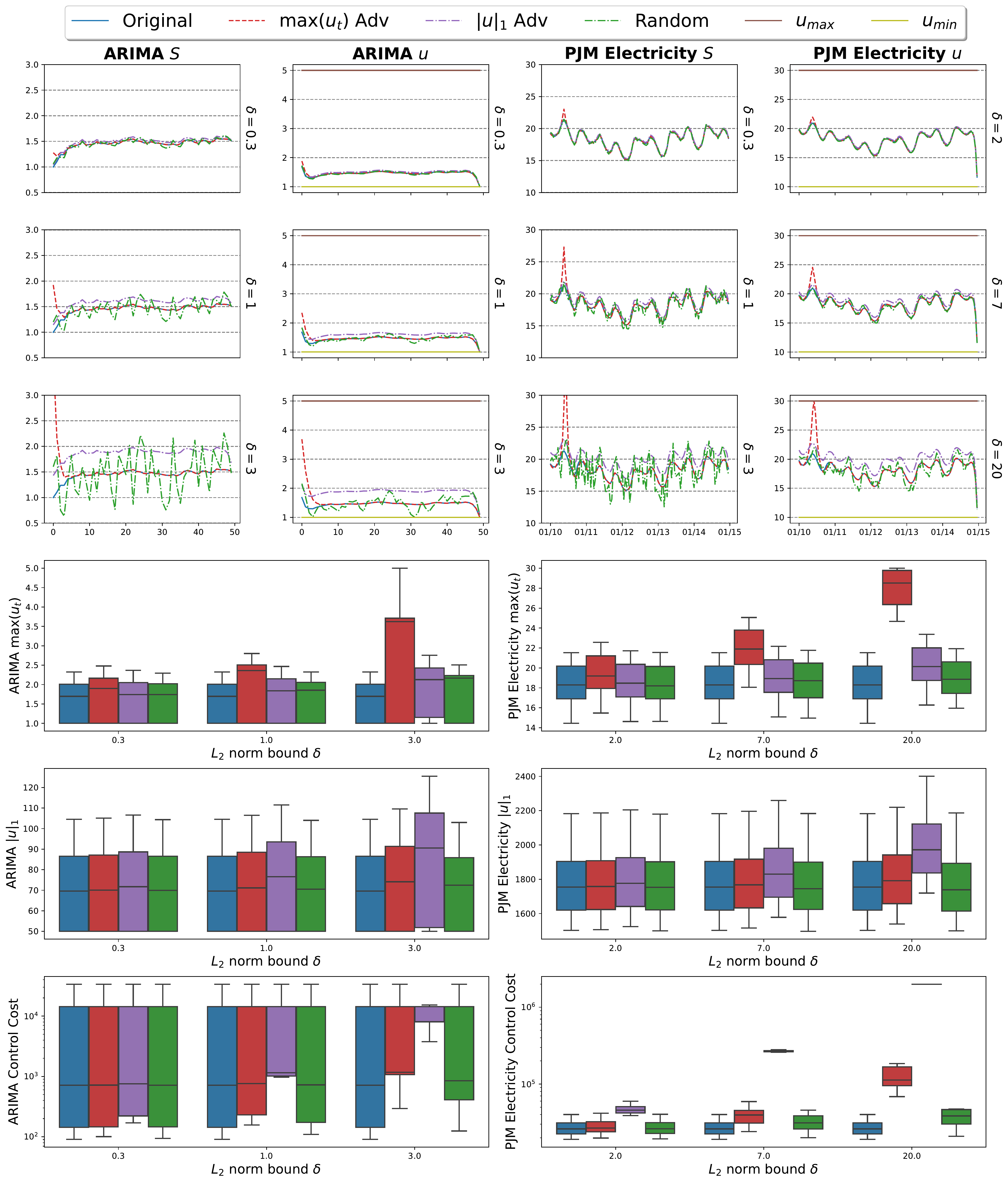}}
  \caption{\small
  \textbf{Adversarial attacks that target control cost and control constraints are inherently different.}
  The first three rows show how timeseries and resulting control actions differ under various perturbation bounds $\deltanoise$. $\max(\action{t})$ aims to maximize the maximum value of actions, so it only perturbs the timeseries in a short time interval where the maximum value originally occurs. 
  On the other hand, $\|\action{}\|_1$ aims to maximize the absolute sum of control actions so it perturbs almost every time step. 
  The resulting maximum values of actions, absolute sums of actions, and control costs are shown in the last three rows of the plot. For the first two rows of box plots, when the perturbation bound $\deltanoise$ increases, the corresponding attacks increase the target functions more while other scenarios are closer to the original one.  For $\max(\action{t})$ of PJM electricity , when the perturbation bound $\deltanoise$ is $20$, all cases reach the maximum constraint value of the action, so it is a horizontal line not a boxed distribution. 
  The last row shows that control costs are not necessarily increased under different attacks that target control constraints, illustrating our main point that the two targets of attack are indeed different. Note that control costs are shown on a log scale, so $\max(\action{t})$ of PJM electricity appears as a narrow distribution. 
  Since attacks on different targets of control cost and control constraints are independent, we show that there exist diverse attack surfaces on controllers.
}
\label{fig:constr_adv}
\end{figure*}

\textbf{Extensions and Limitations:}
Eq. \ref{eq:chainrule} uses a gradient to calculate the first order Taylor expansion of the solution map. One can also use more sophisticated gradient ascent methods, such as the optimizer ADAM \cite{ADAM}, to update $\bm{\Shat{}}$ as long as Eq. \ref{eq:generalperturb_constr} holds. 
Also Eq. \ref{eq:generalperturb_constr} need not be the $L_2$ norm, so Eq. \ref{eq:chainrule} can be modified to other norms by changing the $\unit(\cdot)$ operator to normalize other norms.
However, the approximation addressed in Eq. \ref{eq:chainrule} has a shortcoming -- when any gradient is $\mathbf{0}$, $\bm{\Shat{}}^* = \bm{\Sgt{}}$. Approximation fails in this case and returns the same timeseries. 
It happens when $\advfunc(\bm{\action{\mathrm{gen}}}^*(\state{0};\bm{\Shat{}}); \state{0},\bm{\Sgt{}})$ or $\bm{\action{\mathrm{gen}}}^*(\state{0};\bm{\Shat{}})$ have local extrema, and Eq. \ref{eq:argcostadv} is exactly the case. 
The reason is simple--we are only using first order approximation to solve Eq. \ref{eq:generalperturb_constr}. If one can calculate higher order terms, this shortcoming will be resolved. Nevertheless, second order differentiation of a convex optimization problem is an open research topic and out of the scope of this paper.

\section{Experiments}

The goal of our experiments is to show that we can slightly perturb a timeseries to targetedly attack control cost and control constraints. We show that our targeted adversarial attacks affect control cost significantly more than random perturbations of the timeseries with the same perturbation bound, illustrating the efficacy and potency of the attacks.

We evaluate our methods on two sets of timeseries. The first one is a synthetic Autoregressive Integrated Moving Average (ARIMA) process \cite{Harvey1990ARIMA} generated by random parameters. 
The other is hourly electric energy consumption data from PJM Interconnection LLC \cite{PJMKaggle}, a regional transmission organization (RTO) in the Eastern United States. 
We sample the timeseries from the data set of American Electric Power from 2015. 
In each set, a distribution of timeseries is input to the fixed control task described above. 
For both sets, $\xtdim=\utdim=\stdim=1$, $\state{0}=1$, $\A=\C=\Q=\R=1$, and $\B=-1$.
$\horizon=50$ for ARIMA, and $\horizon=120$ for PJM electricity. 
ARIMA's experiment is purely synthetic, while the PJM electricity tasks is to simulate real world battery operation. The operator decides how much to discharge or charge the battery (action $u_t$) so that the stored battery capacity (state $x_t$) can meet the electricity market demand forecast (external timeseries). The operator's goal is to minimize the control cost, expressed as a quadratic combination of the current states and actions. The operator wants the battery capacity to reach a set-point of half-full at every time step, represented as $\state{\currenttime}=0$, to allow flexibly switching between favorable markets according to \cite{NIPS2017Donti}. 
Therefore, the control cost matrix $\Q$ weights a penalty on the state's deviation over or under the set-point. Also, the cost matrix $\R$  penalizes the amount of charging and discharging to preserve battery health.

We first quantitatively show how much bounded adversarial perturbations $\bm{\Shat{}}$ can increase the control cost $\ctrlcost(\bm{\action{gen}}^*(\state{0};\bm{\Shat{}}); \bm{\Sgt{}},\state{0})$ in Fig.\ref{fig:adv_agnostic}. 
Next, we show how perturbed timeseries can affect the control actions $\bm{\action{}}$ with different target functions such as $\max\{\action{t}\}$ and $\|\bm{\action{}}\|_1$ in Fig. \ref{fig:constr_adv}. 
We compare our methods with a benchmark of adding random noise to the timeseries, denoted as \textit{Random} in the figures. 
The x-axes of ARIMA and PJM electricity timeseries are time steps and date (in hours) in Fig. \ref{fig:adv_agnostic} and \ref{fig:constr_adv}, respectively.

\subsection{Adversarial Attacks on Control Cost}

In Fig. \ref{fig:adv_agnostic}, we compare the two sets of timeseries to the original one under two attack scenarios: Cost adversarial (derived in Thm. \ref{thm:costadv}) and Random attack, denoted as \textit{Cost Adv} and \textit{Random}, respectively.
Note that random attack is the same as the control-agnostic perturbation as described in Eq. \ref{eq:controlagnos} since all vectors are dominant eigenvectors for control-agnostic. 
As shown in Fig. \ref{fig:adv_agnostic}, when the perturbation bound $\deltanoise$ increases, \textit{Cost Adv} and \textit{Random} deviate more from the original timeseries. 
In general, the matrix determining the cost $\codesignPsi$ has a large dominant eigenvalue $\domeval$ when its dimension is large, so adversarial control costs $\ctrlcost(\bm{\action{gen}}^*(\state{0};\bm{\Shat{}}); \bm{\Sgt{}},\state{0})$ dramatically increase.
When the timeseries perturbation bounds $\deltanoise$ are $0.3$, $1$ and $3$, control costs increase $7.4\%$, $82\%$, and $740\%$ on average for ARIMA. Similarly, when the perturbation bounds $\deltanoise$ are $2$, $7$ and $20$, control costs increase $85\%$, $1000\%$, and $8500\%$ on average for PJM electricity. 
The control cost is quadratic in the perturbation bound $\deltanoise$, which is consistent with Thm. \ref{thm:costadv}. 

In the PJM electricity experiments, the external attack can affect the cost by orders of magnitude, which reflects the flexibility of the operator to switch markets as well as battery health.  
We also show the Wilcoxon p-values \cite{WilcoxonPValue} of cost distributions between \textit{Cost Adv} and \textit{Random}. All p-values are statistically significant (less than 0.01) and thus show the efficacy of the \textit{Cost Adv} method compared to random perturbations. 
Clearly, adversarial timeseries similar to original ones can increase the control costs dramatically by orders of magnitude for both ARIMA and PJM electricity datasets.

\subsection{Constraint Adversarial}

In Fig. \ref{fig:constr_adv}, we show how timeseries can affect the control actions and the resulting control cost $\ctrlcost(\bm{\action{gen}}^*(\state{0},\bm{\Shat{}}); \bm{\Sgt{}},\state{0})$. The controller has constraints on the maximum and minimum values of the action, denoted as $u_{max}$ and $u_{min}$ in Fig. \ref{fig:constr_adv}.
We compare three attack scenarios with the original timeseries: 1. Box-constrained LQR. 2. Energy Limitation. 3. Random. The target functions of Box-constrained LQR and Energy Limitation are maximum action $\max\{\action{t}\}$ and absolute sum of actions $\|\bm{\action{}}\|_1$, as described in Sec. \ref{sec:constr_adv}. 
In Fig. \ref{fig:constr_adv}, when the perturbation bound $\deltanoise$ increases, all three perturbed timeseries deviate more from the original one.
The adversarial timeseries maximizing the maximum action $\max\{\action{t}\}$ only perturbs a small interval of time steps since its target function only relates to some consecutive elements of the control actions.
On the other hand, the timeseries maximizing the absolute sum of actions $\|\bm{\action{}}\|_1$ shifts every time step since its target function relates to the absolute sum of all actions. 

The fourth row of Fig. \ref{fig:constr_adv} shows the resulting maximum value of actions in different scenarios and perturbation bounds $\deltanoise$. 
$\max\{\action{t}\}$ increases $4.6\%$, $15\%$, and $47\%$ on average for ARIMA and $5.5\%$, $20\%$, and $53\%$ on average for PJM electricity under perturbation bounds $0.3$, $1$, $3$, and $2$, $7$, $20$, respectively. 
As for $\|\bm{\action{}}\|_1$ in the fifth row of Fig. \ref{fig:constr_adv}, it increases $1.2\%$, $4\%$, and $13\%$ on average for ARIMA and $1.2\%$, $4.3\%$, and $12\%$ on average for PJM electricity under perturbation bounds $0.3$, $1$, $3$, and $2$, $7$, $20$, respectively.
In the last row of Fig. \ref{fig:constr_adv}, we show that different target functions may increase the control costs, but not as significantly as Fig. \ref{fig:adv_agnostic}, since they are not directly related to the target function of the attack. Hence, we do not show the Wilcoxon p-values. 

The results confirm that the attack for a specific target function is significantly different than for other target functions. 
In the PJM electricity experiments, external attacks can affect the maximum value or absolute sum of the operator's actions to discharge its battery. The attacks cause the controller to discharge more power to the battery, thus potentially reducing its operating life, or consuming more energy when charging the battery. 

\section{Conclusion and Future Work}
In this paper, our key insight is that input-driven controllers are sensitive to external attacks, and there are different attack targets besides control cost, such as control constraints. 
Thus, we propose a general formulation of white-box attacks, where attackers can perturb the external timeseries observed by the controller to maximize any differentiable function that may or may not be the control cost. 
Our key contribution is to first show an analytical form of an attack on control cost in the linear case. We show that the attack can increase the cost by $8500\%$ on average with \textbf{\textit{bounded timeseries perturbations that are very miniscule to the human eye}} on real electricity demand patterns. 
Then, we formulate attacks on \textbf{\textit{any differentiable target function}} with a general convex controller and approximate the optimal attack.
We validate our methods on synthetic ARIMA and real world electricity demand patterns with two target functions, namely the maximum action value and absolute sum of actions.

In future work, we plan to discuss how to detect the presence of adversarial attacks, as it may not be reflected in increased control costs when attackers target other system metrics. 
Also, more complicated gradient descent or ascent methods with momentum, like the optimizer ADAM \cite{ADAM}, can be used in Eq. \ref{eq:generalperturb} to avoid local extrema. However, we need more research to analytically characterize the effect of such methods. 
Finally, we plan to generate certified defenses for adversarial attacks on timeseries. That is, when we know an adversarial source is perturbing the timeseries, how can a controller adjust its optimization parameters to retain its original control cost and actions?

\bibliographystyle{ieeetr}
\bibliography{bibtex/swarm, bibtex/external}



\end{document}